\newenvironment{proof}[1][Proof]{\textbf{#1.} }{\ \rule{0.5em}{0.5em}}
\newtheorem{assumption}{Assumption}
\DeclareOldFontCommand{\rm}{\normalfont\rmfamily}{\mathrm}
\DeclareOldFontCommand{\sf}{\normalfont\sffamily}{\mathsf}
\DeclareOldFontCommand{\tt}{\normalfont\ttfamily}{\mathtt}
\DeclareOldFontCommand{\bf}{\normalfont\bfseries}{\mathbf}
\DeclareOldFontCommand{\it}{\normalfont\itshape}{\mathit}
\DeclareOldFontCommand{\sl}{\normalfont\slshape}{\@nomath\sl}
\DeclareOldFontCommand{\sc}{\normalfont\scshape}{\@nomath\sc}
\DeclareMathOperator{\ess}{ess}
\newcommand \diag  {\operatorname{diag}}
\newcommand \R   {\mathbb{R}}
\newcommand \K   {\mathcal{K}}
\newcommand \Kinf{\mathcal{K_\infty}}
\newcommand \KL  {\mathcal{KL}}
\newcommand \LL  {\mathcal{L}}
\newcommand{\vertiii}[1]{{\left\vert\kern-0.25ex\left\vert\kern-0.25ex\left\vert #1 
    \right\vert\kern-0.25ex\right\vert\kern-0.25ex\right\vert}}
\newcommand{\normt}[1]{{\left\vert\kern-0.25ex\left\vert\kern-0.25ex\left\vert #1 
		\right\vert\kern-0.25ex\right\vert\kern-0.25ex\right\vert}}
\newif\ifMath					%For mathematical applications only
\newif\ifEngi					%For engineering applications only
\newif\ifDFGtext					 %DFG explanations to the proposals, not a part of the final proposal.
\newif\ifAndo              %Some comments, that are not intended for final version. 
\newif\ifExercises					%Yes, if the exercises are included (e.g. exercises do not fit for Habil, or PhD thesis)
\newif\ifSolutions          %Yes, if the solutions (in this if environment) should be included.
\newif\ifGerman							%Yes, if the German parts should be included
\newif\ifEnglish						%Yes, if the English parts should be included
\newif\ifnothabil						%Yes, if this is NOT a habilitation thesis
\newif\ifFuture							%Yes, if some 'Future', unfinished parts should be added
\newif\ifConf                    %Parts only for a conference version of the paper
\newif\ifJournal								 %Parts only for a journal version of the paper
\newif\ifNOTFORBOOK
\newif\ifFullVersion
\newif\ifExludedDueToSpaceReasons
\newcommand{\einsnorm}[2]{\ensuremath{
    \!\!\;\!\!\!\;
    \left\bracevert\!\!\!\!\!\left\bracevert
    \!
		\ifthenelse{\isempty{#2}}{#1}{#1(#2)}
        %#1(#2)
    \!
      \right\bracevert\!\!\!\!\!\right\bracevert
    \!\!\;\!\!\!\;
  }}
\definecolor{blond}{rgb}{0.98, 0.94, 0.75}
\newlength\mytemplen
\newsavebox\mytempbox
\newcommand\mybluebox{%
    \@ifnextchar[%]
       {\@mybluebox}%
       {\@mybluebox[0pt]}}
\def\@mybluebox[#1]{%
    \@ifnextchar[%]
       {\@@mybluebox[#1]}%
       {\@@mybluebox[#1][0pt]}}
\def\@@mybluebox[#1][#2]#3{
    \sbox\mytempbox{#3}%
    \mytemplen\ht\mytempbox
    \advance\mytemplen #1\relax
    \ht\mytempbox\mytemplen
    \mytemplen\dp\mytempbox
    \advance\mytemplen #2\relax
    \dp\mytempbox\mytemplen
    \colorbox{blond}{\hspace{1em}\usebox{\mytempbox}\hspace{1em}}}
\let\origd=\d
\renewcommand*\d{
  \relax\ifmmode
    \mathrm{d}%
  \else
    \expandafter\origd
  \fi
}\makeatother
\newcommand{\pushright}[1]{\ifmeasuring@#1\else\omit\hfill$\displaystyle#1$\fi\ignorespaces}
\newcommand{\pushleft}[1]{\ifmeasuring@#1\else\omit$\displaystyle#1$\hfill\fi\ignorespaces}
\newcounter{syscounter}
\newcounter{WPcounter}%[section] %[section]
\newcounter{PRcounter}%[section] %[section]
\newcommand{\remarkend}{\hfill $\blacktriangle$} % or \blacktriangle for filled
\begin{document}

\begin{frontmatter}

\title{Virtual Resistance-Based Control \\ for Grid-Connected Inverters\\ using Persidskii Systems Approach}

\author[Aix]{Chakib Chatri} 
\author[Inria]{Ajul Dinesh}
\author[Aix]{Moussa Labbadi}

\address[Aix]{Aix-Marseille University, LIS UMR CNRS 7020, 13013 Marseille, France 
  (e-mail: \{chakib.chatri, moussa.labbadi\}@lis-lab.fr).}

\address[Inria]{Inria, Univ. Lille, CNRS, UMR 9189 - CRIStAL, 59000 Lille, France.
  (e-mail: ajul.dinesh@inria.fr).}

\begin{abstract}
This work addresses virtual resistance (VR)–based control for grid-connected inverters, which enhances transient damping, reduces steady-state errors, and improves robustness to grid disturbances without requiring additional voltage sensors. Classical passivity-based VR control is robust, but limited by restrictive sector bounds on nonlinearities. We extend these bounds and model the closed-loop system as a generalized Persidskii-type nonlinear system. Using this framework, we derive input-to-state stability (ISS) conditions that account for the extended nonlinearities and external disturbances, providing a systematic and less conservative approach to VR control design under practical operating conditions, which is validated through extensive simulations.
\end{abstract}

\begin{keyword}
Virtual resistance–based control; Persidskii systems; Input-to-state stability.
\end{keyword}

\end{frontmatter}

\section{Introduction}
%The development and deployment of advanced control techniques to enhance the stability and performance of power electronic converters and renewable-energy-driven electrical power grids has gained significant attention in recent years \citep{Ameli2022, Dinesh2023}. Commonly used control schemes in modern power systems include model predictive control \citep{Yang2017}, adaptive backstepping control \citep{Katir2021}, active disturbance rejection control \citep{Meng2020}, $H_\infty$ control \citep{Dinesh2024}, sliding‐mode control \citep{Guler2022}, and hierarchical robust adaptive control \citep{Ameli2021a}. Although these techniques offer strong theoretical guarantees, they often lack clear physics-based interpretations. This makes controller tuning challenging and limits their practical adoption in inverter-dominated and renewable-integrated power systems.
Recently, virtual resistance (VR)–based control strategies have gained significant attention as a promising solution for power electronic control in modern electrical grids \citep{zhu2024active,chen2023modified}. A key advantage of VR-based methods is that they do not require additional voltage sensing hardware, making them attractive for scalable deployment in inverter-based systems \citep{Wang2014,Deng2021}. When applied to grid-connected converters, virtual resistance has been shown to improve stability by increasing damping, mitigating transient oscillations, reducing steady-state errors, and protecting power electronic devices from excessive load currents \citep{Me2021,Li2021}. These features make VR control a practical and intuitive alternative to synthetic inertia approaches for enhancing grid resilience, particularly in systems with a high penetration of power electronics.

Several recent studies have explored the potential of linear VR-based control laws for improving performance of grid-connected inverters. For instance, \cite{Me2021} employed a linear VR control strategy to enhance post-fault oscillation damping in grid-forming inverters. Similarly, \cite{Li2021} investigated linear VR-based control for improving the stability of grid-connected inverters in DC microgrids. Passivity-based linear VR controllers have also been used to analyze the stability of current-controlled grid-connected converters \citep{Xie2020, Valderrama2001}. Furthermore, linear VR control has been applied to DC solid-state transformers for suppressing circulating and harmonic currents \citep{Meng2021}. In addition, the influence of virtual resistance control on the transient stability of virtual synchronous generators has been investigated by \cite{Xiong2021}.

Despite these advances, the performance of linear VR controllers can significantly deteriorates under abrupt grid-voltage disturbances or gradual variations in grid parameters such as resistance and inductance \citep{anubi2022robust}. Such variations are increasingly common in modern distribution networks with high renewable penetration, nonlinear loads, and reconfigurable grid topologies.

To overcome these limitations, nonlinear VR control techniques based on passivity have been introduced for inverter-based systems \citep{gao2024passivity, anubi2022robust}. While these methods offer enhanced robustness and retain useful physics-based controller intuition, they typically impose restrictive sector-bound conditions on the admissible nonlinearities. These constraints limit the allowable virtual damping characteristics and often lead to conservative or technically complex stability conditions.

To remove these restrictions and obtain more tractable stability conditions, this work proposes a flexible VR-based control framework for grid-connected inverters. The proposed approach accommodates a broader class of current-dependent nonlinear VR controllers, while preserving essential passivity properties. We first establish input-to-state stability (ISS) of the closed-loop current-error dynamics of the grid-connected inverter using a Lyapunov-based analysis. Although these obtained conventional ISS conditions are stringent, we further reformulate the dynamics as a generalized Persidskii-type nonlinear system, enabling a relaxation of the obtained ISS conditions. By leveraging Persidskii-system analysis, we derive ISS conditions that explicitly account for the extended nonlinearities and external perturbations by employing a Lyapunov function that depends on the nonlinear structure of the system, thereby relaxing the sector bounds. This yields simplified and computationally tractable stability conditions suitable for inverter-based grid applications.

The main contributions of this work are as follows:
\begin{enumerate}
\item A flexible nonlinear virtual resistance–based control strategy is proposed, supporting a broader class of nonlinearities to enhance transient and steady-state performance in grid-connected inverters.
\item ISS of the closed-loop current dynamics is established using Lyapunov-based analysis, ensuring robustness to grid-voltage disturbances and grid-parameter uncertainties.
\item Through a Persidskii-type reformulation of the inverter current dynamics, simplified ISS conditions are derived in the form of linear matrix inequalities (LMIs), enabling tractable robustness assessment under external perturbations.
\end{enumerate}

The effectiveness of the proposed nonlinear VR-control strategy is demonstrated through numerical simulations involving significant grid-voltage variations and model-parameter perturbations.

The rest of the paper is organized as follows. Section~\ref{sec:preliminaries} reviews preliminaries on ISS and its Lyapunov characterizations. Section~\ref{sec:system_problem} describes the system dynamics and control objectives. Section~\ref{sec:main_results} presents the ISS analysis of the VR-controlled inverter using the Persidskii-system framework. Section~\ref{sec:simulation} provides numerical studies and comparisons. Finally, Section~\ref{sec:conclusion} concludes the paper.

\textit{Notations}: The symbols $\mathbb{R}$, $\mathbb{R}^n$, and $\mathbb{R}^{m \times n}$ denote the set of real numbers, the space of $n$-dimensional real vectors, and the space of $m \times n$ real matrices, respectively. Likewise, $\mathbb{R}_+$, $\mathbb{R}^n_+$, and $\mathbb{R}^{m \times n}_+$ represent the sets of non-negative real numbers, vectors, and matrices. For a vector $x \in \mathbb{R}^n$, $\|x\|$ denotes the Euclidean norm. For a Lebesgue-measurable function $u \in \mathbb{R}^m$, we define $\|u\|_{\infty}:= \ess \sup \{\|u(t)\|, ~ t \ge 0 \}$. The space of functions with $\|u\|_{\infty} < \infty$ is denoted by $\mathcal{L}_\infty^m$. The symbol $I$ denotes an identity matrix of appropriate dimensions. For a continuously differentiable function $V: \mathbb{R}^n \to \mathbb{R}$, the expression $\nabla V(x)^\top f(x) $ denotes the Lie derivative of $V$ along $f$, evaluated at $x \in \mathbb{R}^n$.

\section{Preliminaries}\label{sec:preliminaries}
In this section, we review several fundamental notions, including comparison function classes, ISS concepts, and Lyapunov-based characterizations of ISS.
\subsection{Comparison Functions}
To formalize various stability properties, we begin by introducing the standard classes of comparison functions:
\begin{equation*} \begin{array}{ll} {\K} &:= \left\{\gamma:\R_+\rightarrow\R_+\left|\ \gamma\mbox{ is continuous, strictly} \right. \right. \\ &\phantom{aaaaaaaaaaaaaaaaaaa}\left. \mbox{ increasing and } \gamma(0)=0 \right\}, \\ {\K_{\infty}}&:=\left\{\gamma\in\K\left|\ \gamma\mbox{ is unbounded}\right.\right\},\\ {\LL}&:=\left\{\gamma:\R_+\rightarrow\R_+\left|\ \gamma\mbox{ is continuous and strictly}\right.\right.\\ &\phantom{aaaaaaaaaaaaaaaa} \text{decreasing with } \lim\limits_{t\rightarrow\infty}\gamma(t)=0\},\\ {\KL} &:= \left\{\beta:\R_+\times\R_+\rightarrow\R_+\left|\ \beta \mbox{ is continuous,}\right.\right.\\ &\phantom{aaaaaa}\left.\beta(\cdot,t)\in{\K},\ \beta(r,\cdot)\in {\LL},\ \forall t\geq 0,\ \forall r >0\right\}. \\ \end{array} \end{equation*}
For a comprehensive presentation of these function classes, the reader is referred to \citep[Appendix A]{Mir23} and \citep{Kel14}.

\subsection{Stability Definitions}
Consider the nonlinear control system,
\begin{equation}
\label{xdot=f_xu}
\dot{x}(t) = f(x(t), d(t)), \quad x(0) = x_0 \in \R^n,
\end{equation}
where \(x(t) \in \R^n\) denotes the state and \(d(t) \in \mathcal{L}_\infty^m \subset \R^m\) denotes the control input/disturbance. The function \(f : \R^n \times \R^m \to \R^n\) is locally Lipschitz in \(x\), for every fixed \(d\).
We denote by \( x(t, x_0, d) \) the unique solution of \eqref{xdot=f_xu} at time \(t\), starting from the initial condition \(x_0\) under the input \(d\).
A key notion used throughout this paper is \emph{input-to-state stability (ISS)}, introduced by \cite{Son89}, which provides a formal framework for analyzing robust stability of nonlinear systems.

\begin{defn}
\label{Def:ISS}
System \eqref{xdot=f_xu} is called \emph{ISS}, if it is forward complete and there exist $\beta \in \KL$ and $\gamma \in \K$ such that for all $x_0 \in \R^n$, all $d \in \mathcal{L}_\infty^m$, and all $t\ge 0$, the following holds:
\begin{equation}
\label{iss_sum}
\|x(t,x_0,d)\| \leq \beta(\|x_0\|,t) + \gamma(\|d\|_{\infty}).
\end{equation}
\end{defn}

In \eqref{iss_sum}, the function $\beta$ describes the transient behavior of the ISS system, while $\gamma$ quantifies its asymptotic deviation from the origin. 

ISS systems exhibit several fundamental properties. For instance, by choosing \(d \equiv 0\) in \eqref{iss_sum}, 
%one observes that an ISS system without disturbances is uniformaly globally asymptotically stable. In particular,
the state trajectories are bounded by the \(\KL\)-function \(\beta\), which provides a uniform decay rate for all solutions starting within any bounded set.
Under our standing assumptions on \(f\), this property is equivalent to the \emph{global asymptotic stability} (i.e., local stability together with global attractivity) of the undisturbed system \citep{Mir23}.

On the other hand, taking the limit supremum of \eqref{iss_sum} as \( t \to \infty \), one obtains that if the system is ISS, then there exists a function \( \gamma \in \Kinf \) such that
\begin{equation}
\limsup_{t\to\infty} \|x(t, x_0, d)\| \le \gamma(\|d\|_{\infty}),
\quad x \in \R^n,\; d \in \mathcal{L}_\infty^m.
\label{eq:AG}
\end{equation}
This characteristic is known as the \emph{asymptotic gain property}, and the function \( \gamma \) is referred to as an \emph{asymptotic gain} of the system \eqref{iss_sum}.
In particular, every system with the asymptotic gain property (and thus every ISS system) has trajectories that remain bounded whenever the input magnitude is bounded.

\subsection{Lyapunov Characterizations}
This subsection presents the classical Lyapunov-based characterizations used to assess input-to-state stability. These results form the foundation for the analysis and design methodologies developed in the following sections.
\begin{defn}{\citep{Son89}}
\label{def:ISS_Lyap}
A smooth function \( V : \R^n \to \R_+ \) is called an \emph{ISS-Lyapunov function} for system \eqref{xdot=f_xu} if there exist comparison functions \( \gamma_1, \gamma_2, \gamma_3 \in \K_\infty \) and \( \alpha \in \K \) such that, for all \( x \in \R^n \) and \( u \in \R^m \),
\begin{align}
\gamma_1(\|x\|) &\le V(x) \le \gamma_2(\|x\|), \label{ISS_Lyap_bound} \\
\|x\| \ge \alpha(\|d\|_\infty) &\implies \nabla V(x) \cdot f(x,d) \le -\gamma_3(\|x\|). 
\end{align}
\end{defn}
\begin{thm} \citep{Son89} \label{thm:ISS_Lyap}
The system \eqref{xdot=f_xu} is ISS \textit{if and only if} it possesses an ISS-Lyapunov function.
\end{thm}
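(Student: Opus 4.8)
This is the ISS-Lyapunov characterization from \citep{Son89}, whose necessity part is the Sontag--Wang converse theorem. The plan is to establish the two implications separately: sufficiency is a comparison argument, while necessity requires a (smooth) converse Lyapunov construction and is the substantial part.

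\emph{Sufficiency.} Suppose $V$ satisfies \eqref{ISS_Lyap_bound} and the decay implication with $\gamma_1,\gamma_2,\gamma_3\in\Kinf$ and $\alpha\in\K$. First I would establish forward completeness: since $\gamma_2(\|x\|)\ge V(x)$, whenever $v(t):=V(x(t))>\gamma_2(\alpha(\|d\|_\infty))$ one has $\|x(t)\|>\alpha(\|d\|_\infty)$ and hence $\dot v(t)=\nabla V(x(t))\cdot f(x(t),d(t))<0$; therefore $v(t)\le\max\{v(0),\gamma_2(\alpha(\|d\|_\infty))\}$ on the maximal interval of existence, and the lower bound $\gamma_1(\|x\|)\le V(x)$ converts this into an a priori bound on $\|x(t)\|$, excluding a finite escape time, so all solutions are global. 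Next, on $\{\|x\|\ge\alpha(\|d\|_\infty)\}$ I would write $-\gamma_3(\|x\|)\le-\gamma_3(\gamma_2^{-1}(V(x)))=:-\chi(V(x))$ with $\chi\in\Kinf$, so that $\dot v\le-\chi(v)$ holds as long as $v\ge\gamma_2(\alpha(\|d\|_\infty))$. A standard scalar comparison argument (construction of a $\KL$-bound from the differential inequality $\dot v\le-\chi(v)$; see \citep{Mir23}) then yields $\beta_0\in\KL$ with $v(t)\le\max\{\beta_0(v(0),t),\,\gamma_2(\alpha(\|d\|_\infty))\}$. Sandwiching with \eqref{ISS_Lyap_bound} gives the ISS estimate in ``max'' form, and $\max\{a,b\}\le a+b$ converts it into \eqref{iss_sum} with suitable $\beta\in\KL$ and $\gamma\in\K$.

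\emph{Necessity.} Assume \eqref{xdot=f_xu} is ISS, with estimate \eqref{iss_sum} and the asymptotic-gain consequence \eqref{eq:AG}. The plan has two steps: (i) reduce ISS to a robust-stability statement for an autonomous perturbed system, and (ii) invoke a smooth converse Lyapunov theorem for that system. For (i) I would show that there is $\rho\in\Kinf$ such that, along trajectories of $\dot x=f(x,d)$ subject to the state-dependent restriction $\|d(t)\|\le\rho(\|x(t)\|)$, the origin is uniformly globally asymptotically stable; intuitively, once $\|x\|$ dominates the instantaneous input, \eqref{iss_sum} forces decay, and choosing $\rho$ small enough makes this decay uniform over all admissible $d$. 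Equivalently, the origin is uniformly globally asymptotically stable for the differential inclusion $\dot x\in F(x):=\overline{\{f(x,w):\|w\|\le\rho(\|x\|)\}}$. For (ii) I would apply the converse Lyapunov theorem for robust asymptotic stability: a Massera-type integral construction produces a locally Lipschitz Lyapunov function, which a standard smoothing step upgrades to a smooth $V$ obeying $\Kinf$-bounds as in \eqref{ISS_Lyap_bound} together with $\nabla V(x)\cdot f(x,w)\le-\gamma_3(\|x\|)$ for all $w$ with $\|w\|\le\rho(\|x\|)$. Setting $\alpha:=\rho^{-1}\in\Kinf$, the condition $\|x\|\ge\alpha(\|d\|_\infty)$ is precisely $\|d\|_\infty\le\rho(\|x\|)$, so $V$ is an \isslf in the sense of Definition~\ref{def:ISS_Lyap}, which finishes the proof.

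The main obstacle is step (ii): producing a \emph{smooth} Lyapunov function valid uniformly over the whole perturbation family. This requires (a) the nontrivial passage from the trajectory-wise bound \eqref{iss_sum} to uniform asymptotic stability of the constrained dynamics — in particular closedness/compactness of $F(x)$ and uniformity of the $\KL$ estimate over admissible $d$ — and (b) smoothing the resulting (generally nonsmooth) Lyapunov function without destroying its strict decrease. Checking that local Lipschitzness of $f$ in $x$ (uniformly in $d$) is enough for these constructions, and the equivalence of the ``max'' and ``sum'' formulations of ISS invoked above, are the remaining routine points; complete arguments can be found in \citep{Son89} and \citep{Mir23}.
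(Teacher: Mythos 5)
The paper offers no proof of this theorem: it is imported verbatim from the literature with the citation \citep{Son89}, so there is nothing internal to compare your argument against. Your sketch is the standard and correct roadmap for this result. The sufficiency direction is essentially complete: the a priori bound $v(t)\le\max\{v(0),\gamma_2(\alpha(\|d\|_\infty))\}$ rules out finite escape, and the comparison argument for $\dot v\le-\chi(v)$ on the set where $v\ge\gamma_2(\alpha(\|d\|_\infty))$, followed by sandwiching with \eqref{ISS_Lyap_bound} and $\max\{a,b\}\le a+b$, is exactly how one obtains \eqref{iss_sum}. The necessity direction is correctly identified as the hard part and your two-step plan (reduction of ISS to uniform global asymptotic stability of the perturbed system under a state-dependent input restriction $\|d\|\le\rho(\|x\|)$, then a smooth converse Lyapunov theorem for the resulting differential inclusion) is precisely the Sontag--Wang strategy; be aware, though, that this converse half is not in the 1989 paper the theorem is attributed to here, but in the later Sontag--Wang characterization paper together with the Lin--Sontag--Wang converse Lyapunov theorem, and your write-up honestly delegates exactly those two nontrivial steps (uniformity of the $\KL$ estimate over the admissible perturbation family, and the smoothing of the Massera-type construction) to the references rather than proving them. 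As a self-contained proof it is therefore only an outline on the necessity side, but as a justification of a cited classical theorem it is accurate, and the one place where your argument interacts with the paper's specific formulation --- the implication-form decay condition in Definition~\ref{def:ISS_Lyap} with $\alpha(\|d\|_\infty)$ --- is handled consistently via $\alpha=\rho^{-1}$.
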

Therefore, establishing ISS reduces to constructing a function that satisfies the conditions stated in Definition~\ref{def:ISS_Lyap}.

\section{System Dynamics and Problem Formulation}\label{sec:system_problem}

\subsection{Grid-Connected Current Dynamics in the $dq$ Frame}
Consider a three-phase inverter connected to the grid through a resistor-inductor interface $r_g-l_g$, as shown in Fig.~\ref{fig:3ph_inverter},
where $r_g>0$ and $l_g>0$. 
\begin{figure} [!ht]
    \centering
     \includegraphics[width=0.9\linewidth]{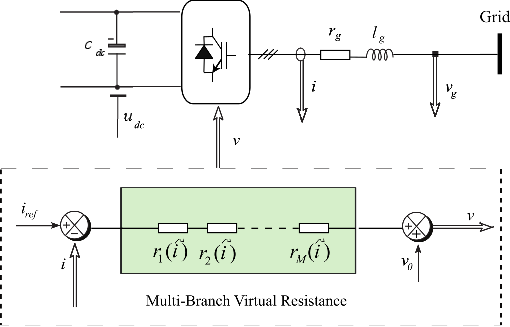}
    \caption{Multi-branch virtual resistance control scheme for the grid-connected inverter.}
    \label{fig:3ph_inverter}
\end{figure}

In the synchronous $dq$ frame, the current dynamics of the grid-connected inverter \citep{anubi2022robust} can be written as:
\begin{equation}
 \dot{i} = -\frac{1}{l_g}\big(r_g I - l_g W\big)\, i + \frac{1}{l_g} v - \frac{1}{l_g} v_g,
\label{eq:model}
\end{equation}
where $i=[i_d\; i_q]^\top\!\in\mathbb{R}^2$ is the line current, $v=[v_d\; v_q]^\top$ is the inverter voltage,
$v_g=[v_{gd}\; v_{gq}]^\top$ the grid voltage, and 
$W=\begin{bmatrix} 0 & \omega_g \\ -\omega_g & 0\end{bmatrix}$ is the skew-symmetric matrix with grid frequency $\omega_g$.

This model captures the coupling between the $d-q$ axes induced by grid frequency and provides a convenient form for analysis and control design to shape the grid power flow and current transients under grid disturbances.

\subsection{Control Objective}

The main objective of this work is to regulate the line current of a grid-connected inverter under abrupt grid-voltage variations and parameter variations in the system. To achieve this, we embed a virtual resistance control action into the inverter voltage $ v$, implemented using a multi-branch virtual resistance, as illustrated in Fig.~\ref{fig:3ph_inverter}.

Given a reference current $i_{ref}$, define the current error $\tilde i = i - i_{ref}$. Accordingly, for current regulation in the dynamics \eqref{eq:model}, we adopt a VR control law of the form 
\begin{equation}\label{eq:control_law}
v = v_0 - r(\tilde i)
\end{equation}
where the term $r(\tilde{i})$ provides a virtual voltage action with the bank of resistors. In \eqref{eq:control_law}, the nominal feedforward voltage $v_0$ is given by:
\begin{equation}
v_0 = (r_g I - l_g W)\, i_{ref} + v_{g_{ref}},
\label{eq:v0}
\end{equation}
where $v_{g_{ref}}$ denotes the nominal grid voltage. 

\begin{rem}
This VR-based control structure allows the inverter to dynamically regulate its damping in response to disturbances, improving transient performance in the grid without major architectural changes. Accordingly, it enhances robustness to voltage variations while preserving the simplicity of a standard current-controlled inverter. Moreover, using current-error feedback ensures consistent performance even under variations in $(r_g,l_g)$ arising from operating conditions or parameter uncertainty.
\remarkend
\end{rem}

\subsection{Multi-Branch Virtual Resistance Control}
While conventional VR control schemes rely on a single virtual branch, their ability to simultaneously shape small-signal damping and large-signal transient behavior is limited. To overcome this, in this paper, we introduce a multi-branch VR structure that enables finer control over nonlinear dissipation characteristics and improves resilience to large disturbances.

Accordingly, in the VR control law \eqref{eq:control_law} we define virtual voltage as, 
\begin{equation}
r(\tilde i) = \sum_{k=1}^{M} r_k(\tilde i)
\label{eq:vr_sum}
\end{equation}
where each branch mapping $r_k:\mathbb{R}^2\!\to\!\mathbb{R}^2$ is realized as a series interconnection of $m$ virtual resistance elements: 
\begin{equation}
r_k(\tilde i) = \sum_{l=1}^{m} r_k^l(\tilde i), 
\qquad k \in \{1, \dots, M\}.
\label{eq:branch_series}
\end{equation}

Substituting~\eqref{eq:control_law}-\eqref{eq:branch_series} into the grid-current model~\eqref{eq:model}
yields the closed-loop error dynamics:
\begin{equation}
\dot{\tilde{i}} 
= A\tilde{i}
- \frac{1}{l_g}\sum_{k=1}^{M} \sum_{l=1}^{m} r_k^l(\tilde{i})
- \frac{1}{l_g}\tilde{v}_g.
\label{eq:error_dyn_generalized}
\end{equation}
where $A=- \frac{1}{l_g}\big(r_g I - l_g W\big)$. Here, the input \(\tilde{v}_g\) acts as a perturbation to the current error dynamics. 

Thus, the control goal is to design the virtual voltages $r(\tilde{i})$ such that the current error satisfies $ \tilde i \to 0$, and the dynamics \eqref{eq:error_dyn_generalized} remain ISS with respect to both grid-voltage disturbances $\tilde v_g = v_g - v_{g_{ref}}$ and parameter variations in $(r_g,l_g)$. 

\begin{rem}
A single virtual resistance control scheme, as in \citep{anubi2022robust}, introduces only coarse damping and does not provide flexibility in adjusting both small-signal and large-signal behavior. In contrast, the proposed multi-branch virtual resistance configuration introduces additional degrees of freedom, allowing independent tuning of small-signal damping and large-signal transient response. 
\remarkend
\end{rem}

\section{Main Results}\label{sec:main_results}
In this section, we analyze the stability of the grid-connected inverter system with the proposed multi-branch VR control. Accordingly, we derive ISS conditions that ensure robustness of the current regulation in the presence of grid disturbances.

\subsection{ISS Analysis of Current Error Dynamics}
Initially, we present sufficient conditions under which the VR control law \eqref{eq:control_law} guarantees ISS of the closed-loop current error dynamics \eqref{eq:error_dyn_generalized} with respect to variations in the grid voltage.
\begin{thm}\label{thm:thm1}
Consider the closed-loop error dynamics in \eqref{eq:error_dyn_generalized}.
If there exists a continuously differentiable, positive-definite
function $V:\mathbb{R}^2 \to \mathbb{R}_+$, class $\mathcal{K}_\infty$ functions $\gamma_1$ and $\gamma_2$ and a scalar $\varepsilon>0$, such that, for all
$\tilde i \in \mathbb{R}^2$, the inequality:
% \begin{equation}
% -\nabla V^\top A\,\tilde i 
%      - \frac{1}{l_g}\sum_{k=1}^{m} \sum_{j=1}^{s_k} \nabla V^\top r_k^j(\tilde{i}) 
%      + \|\tilde i \|_2^2 +   \frac{1}{l_g \gamma}\|\nabla V\|^2
%       \le 0.
%       \label{eq:condition}
% \end{equation}
\begin{gather}
\gamma_1(\|\tilde i \|) \le V(\tilde{i}) \le \gamma_2(\|\tilde i \|) \label{eq:condition1}\\
    \nabla  V(\tilde{i})^\top A\,\tilde i 
      - \frac{1}{l_g}\sum_{k=1}^{M} \sum_{l=1}^{m} \nabla V(\tilde{i})^\top r_k^l(\tilde{i})\nonumber \\
      \hspace{3cm}+\|\tilde i \|^2 +  \frac{\varepsilon}{2l_g}\|\nabla V(\tilde{i})\|^2\le 0 \label{eq:condition2}
\end{gather}
holds, then, the closed-loop error system \eqref{eq:error_dyn_generalized} with the VR control \eqref{eq:control_law} is
ISS with respect to grid voltage disturbances $\tilde{v}_g$.
\end{thm}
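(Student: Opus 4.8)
The plan is to verify that the hypothesised function $V$ is in fact an ISS-Lyapunov function for \eqref{eq:error_dyn_generalized} in the sense of Definition~\ref{def:ISS_Lyap}, and then invoke Theorem~\ref{thm:ISS_Lyap}. The bound \eqref{eq:condition1} already gives the sandwich estimate \eqref{ISS_Lyap_bound} with $\gamma_1,\gamma_2\in\Kinf$, so the only work is the decay estimate. First I would write out the Lie derivative of $V$ along \eqref{eq:error_dyn_generalized}:
\begin{equation*}
\nabla V(\tilde i)^\top \dot{\tilde i}
= \nabla V(\tilde i)^\top A\,\tilde i
- \frac{1}{l_g}\sum_{k=1}^{M}\sum_{l=1}^{m}\nabla V(\tilde i)^\top r_k^l(\tilde i)
- \frac{1}{l_g}\nabla V(\tilde i)^\top \tilde v_g .
\end{equation*}
The first two groups of terms are exactly the left-hand side of \eqref{eq:condition2} minus $\|\tilde i\|^2 + \tfrac{\varepsilon}{2l_g}\|\nabla V(\tilde i)\|^2$, so \eqref{eq:condition2} lets me replace them by $-\|\tilde i\|^2 - \tfrac{\varepsilon}{2l_g}\|\nabla V(\tilde i)\|^2$ as an upper bound.

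Next I would handle the disturbance term by Young's inequality: for any $a>0$,
\begin{equation*}
-\frac{1}{l_g}\nabla V(\tilde i)^\top \tilde v_g
\le \frac{1}{l_g}\|\nabla V(\tilde i)\|\,\|\tilde v_g\|
\le \frac{a}{2l_g}\|\nabla V(\tilde i)\|^2 + \frac{1}{2 a l_g}\|\tilde v_g\|^2 .
\end{equation*}
Choosing $a=\varepsilon$ cancels the $\tfrac{\varepsilon}{2l_g}\|\nabla V(\tilde i)\|^2$ term coming from \eqref{eq:condition2}, leaving
\begin{equation*}
\nabla V(\tilde i)^\top \dot{\tilde i}
\le -\|\tilde i\|^2 + \frac{1}{2\varepsilon l_g}\|\tilde v_g\|^2
\le -\|\tilde i\|^2 + \frac{1}{2\varepsilon l_g}\|\tilde v_g\|_\infty^2 .
\end{equation*}

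Finally I would extract the ISS implication: if $\|\tilde i\| \ge \alpha(\|\tilde v_g\|_\infty)$ with $\alpha(s) := s/\sqrt{\varepsilon l_g}$ (so that $\|\tilde i\|^2 \ge \tfrac{1}{\varepsilon l_g}\|\tilde v_g\|_\infty^2$), then
\begin{equation*}
\nabla V(\tilde i)^\top \dot{\tilde i}
\le -\|\tilde i\|^2 + \tfrac12\|\tilde i\|^2 = -\tfrac12\|\tilde i\|^2 =: -\gamma_3(\|\tilde i\|),
\end{equation*}
with $\gamma_3(s)=s^2/2\in\Kinf$ and $\alpha\in\Kinf\subset\K$. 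Thus $V$ satisfies Definition~\ref{def:ISS_Lyap}, and Theorem~\ref{thm:ISS_Lyap} yields ISS of \eqref{eq:error_dyn_generalized} with respect to $\tilde v_g$. I do not expect a genuine obstacle here; the one point requiring a little care is the bookkeeping in the Young-inequality step — matching the free constant $a$ to the $\varepsilon$ already built into \eqref{eq:condition2} so that the indefinite $\|\nabla V\|^2$ terms cancel exactly rather than merely being dominated — and noting that forward completeness, needed in Definition~\ref{Def:ISS}, follows from the existence of the ISS-Lyapunov function together with the sandwich bound \eqref{eq:condition1}.
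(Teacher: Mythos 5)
Your proposal is correct and follows essentially the same route as the paper: compute the Lie derivative of $V$ along \eqref{eq:error_dyn_generalized}, absorb the disturbance term via Young's inequality with the parameter $\varepsilon$ built into \eqref{eq:condition2}, and arrive at $\dot V \le -\|\tilde i\|^2 + \tfrac{1}{2\varepsilon l_g}\|\tilde v_g\|^2$. Your final step, converting this dissipation inequality into the implication form of Definition~\ref{def:ISS_Lyap} via an explicit gain $\alpha(s)=s/\sqrt{\varepsilon l_g}$, is actually slightly more complete than the paper, which asserts the conclusion directly from the dissipation form.
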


\begin{proof}
For the error dynamics \eqref{eq:error_dyn_generalized}, consider a continuously differentiable, radially unbounded, and positive definite Lyapunov function $V(\tilde{i})$, such that, 
\begin{equation*}
    \gamma_1(\|\tilde i \|) \le V(\tilde{i}) \le \gamma_2(\|\tilde i \|)
\end{equation*}
for $\gamma_1, \gamma_2 \in \mathcal{K}_\infty$. By differentiating $V(\tilde{i})$ along the trajectories of \eqref{eq:error_dyn_generalized}, we obtain,
\begin{equation}\label{eq:Lyapunov_e}
\begin{aligned}
      \dot V(\tilde{i})
    &= \nabla V(\tilde{i})^\top \dot{\tilde i}
    \nonumber\\[1mm]
    &= \nabla V(\tilde{i})^\top \Big(A\,\tilde i
      - \frac{1}{l_g}\sum_{k=1}^{M} \sum_{l=1}^{m} r_k^l(\tilde{i})
          - \frac{1}{l_g}\tilde v_g\Big)
       \nonumber\\[1mm]
    &= \nabla V(\tilde{i})^\top A\,\tilde i
      - \frac{1}{l_g}\sum_{k=1}^{M} \sum_{l=1}^{m} \nabla V(\tilde{i})^\top r_k^l(\tilde{i}) - \frac{1}{l_g}\nabla V(\tilde{i})^\top \tilde v_g  
\end{aligned}
\end{equation}
For an $l_g > 0$, since $- \frac{1}{l_g}\nabla V(\tilde{i})^\top \tilde v_g  \le \frac{1}{l_g}\big|\nabla V(\tilde{i})^\top \tilde v_g\big| $, we have, 
\begin{multline}\label{eq:dotV_bound}
    \dot V(\tilde{i})
    \le \nabla V(\tilde{i})^\top A\,\tilde i
      - \frac{1}{l_g}\sum_{k=1}^{M} \sum_{l=1}^{m} \nabla V(\tilde{i})^\top r_k^l(\tilde{i})\\
      + \frac{1}{l_g}\big|\nabla V(\tilde{i})^\top \tilde v_g\big|.
\end{multline}

Using Young's inequality, for any $\varepsilon>0$, we have the relation, 
\begin{equation*}
    \big|\nabla V(\tilde{i})^\top \tilde v_g\big| \le \|\nabla V(\tilde{i}) \|\|\tilde{v}_g \| \le \frac{\varepsilon}{2}\|\nabla V(\tilde{i}) \|^2 + \frac{1}{2 \varepsilon}\|\tilde{v}_g \|^2.
\end{equation*}
Therefore, \eqref{eq:dotV_bound} can be written as, 
\begin{align} \label{eq:dotV_pre}
    \dot V(\tilde{i})
    &\le  \nabla V(\tilde{i})^\top A\,\tilde i
      - \frac{1}{l_g}\sum_{k=1}^{M} \sum_{l=1}^{m} \nabla V(\tilde{i})^\top r_k^l(\tilde{i})  \nonumber\\
    &\quad
      +  \frac{\varepsilon}{2l_g}\|\nabla V(\tilde{i})\|^2
      +\frac{1}{2l_g\varepsilon}\|\tilde v_g\|^2
\end{align}

Introducing the term $\|\tilde i \|^2$, the above inequality can be equivalently written as,  
\begin{align}\label{eq:dotV_split}
    \dot V(\tilde{i})
    &\le \nabla V(\tilde{i})^\top A\,\tilde i - \|\tilde i \|^2
      - \frac{1}{l_g}\sum_{k=1}^{M} \sum_{l=1}^{m} \nabla V(\tilde{i})^\top r_k^l(\tilde{i}) \nonumber\\
    &\quad
      + \|\tilde i \|^2 +  \frac{\varepsilon}{2l_g}\|\nabla V(\tilde{i})\|^2
      +\frac{1}{2l_g\varepsilon}\|\tilde v_g\|^2.
\end{align}
For a $\varepsilon>0$, if the condition~\eqref{eq:condition2} 
% \begin{multline} \label{eq:dotV_split1}
%       \nabla V(\tilde{i})^\top A\,\tilde i 
%       - \frac{1}{l_g}\sum_{k=1}^{M} \sum_{l=1}^{m} \nabla V(\tilde{i})^\top r_k^l(\tilde{i}) + \|\tilde i \|^2 \\+  \frac{\varepsilon}{2l_g}\|\nabla V(\tilde{i})\|^2\le 0
% \end{multline}
is satisfied for all $\tilde{i}$, then the evolution of $V$ in \eqref{eq:dotV_split} can be expressed as:
\[
\dot V(\tilde{i}) \leq -\|\tilde i \|^2 +\frac{1}{2l_g\varepsilon}\|\tilde v_g\|^2.
\]
This implies ISS of the dynamics \eqref{eq:error_dyn_generalized} according to Definition \ref{def:ISS_Lyap} and Theorem \ref{thm:ISS_Lyap}, which completes the proof.
% in the sense that,
% \begin{equation*}
%     \dot V(\tilde{i}) \leq -\alpha(\|\tilde i \|) + \beta(\|\tilde{v}_g\|)
% \end{equation*}
% for functions $\alpha(\|\tilde{i}\|) = \|\tilde i \|^2 $  and $ \beta(\|\tilde{v}_g\|) = \frac{1}{2l_g\varepsilon}\|\tilde v_g\|^2$.
\end{proof}
\begin{rem}
The condition in \eqref{eq:condition2} provides a Lyapunov-based criterion to verify ISS of the error dynamics in \eqref{eq:error_dyn_generalized}. It reflects the combined effect of the linear part of the dynamics, characterized by the matrix \(A = - \frac{1}{l_g}(r_g I - l_g W)\), and the nonlinear contributions through the virtual voltage term \(\sum_{k=1}^{M} \sum_{l=1}^{m} r_k^l(\tilde{i})\). The term \(\frac{\varepsilon}{2l_g}\|\nabla V(\tilde{i})\|^2\) acts as a tunable damping component that compensates for potential destabilizing effects of the nonlinearities. Therefore, a properly chosen Lyapunov function \(V(\tilde{i})\)  ensures the stability of the generalized error dynamics despite the strongly nonlinear and multi-branch structure introduced by the virtual resistance network.
\remarkend
\end{rem}

Although Theorem~\ref{thm:thm1} provides a rigorous ISS condition, its direct application for analyzing the stability of the current error dynamics is challenging in practice. This is because the inequality \eqref{eq:condition2} involves quadratic (cross) terms in $\tilde{i}$ and $\nabla V(\tilde{i})$, both of which vary with time. Moreover, the multi-branch nonlinear resistor structure with virtual voltage term \(\sum_{k=1}^{M} \sum_{l=1}^{m} r_k^l(\tilde{i})\) 
renders classical quadratic Lyapunov functions insufficient for straightforward verification of \eqref{eq:condition2}.

Therefore, in the following, we present a simplified approach to establishing ISS for the dynamics \eqref{eq:error_dyn_generalized}. This is achieved by interpreting the system as a Persidskii-type nonlinear system \citep{persidskii1969concerning}, which enables the use of a composite Lyapunov function and leads to more tractable stability conditions in terms of linear matrix inequalities.

\subsection{Persidskii Representation and ISS Analysis}

A generalized state–space representation of a Persidskii system \citep{persidskii1969concerning} is given by,
\begin{equation}\label{eq:persidskii_system}
    \dot x(t)
    = A_0 x(t)
      + \sum_{k=1}^{M} A_k f_k\big(x(t)\big)
      + \varphi(t), \quad t \ge 0,
\end{equation}
where, $x(t)\in \mathbb{R}^n$ is the system state with $x(0)=x_0$. The matrices $A_s \in \mathbb{R}^{n \times n}$ for $s \in \{0, \dots, M\}$ are known constants, $f_k(x(t))=\begin{bmatrix}f_{k,1}(x_1(t)) &\dots &f_{k,n}(x_n(t))\end{bmatrix}^\top$ for $k \in \{1, \dots,M\}$ are nonlinear functions, and $\varphi(t) \in \mathcal{L}_\infty^n$ is a bounded disturbance input.

The current-error dynamics in \eqref{eq:error_dyn_generalized} can be written in Persidskii form \eqref{eq:persidskii_system} as, 
\begin{equation}\label{eq:current_persidskii}
    \dot{\tilde{i}} = A \tilde{i} + \sum_{k=1}^{M} - \frac{1}{l_g} r_k(\tilde{i})+-\frac{1}{l_g}\tilde{v}_g. 
\end{equation}
where, $ x(t) = \tilde{i}(t) \in \mathbb{R}^2$, $ A_0 = A \in \mathbb{R}^{2\times2}$, $A_k = -\tfrac{1}{l_g} I_2 \in \mathbb{R}^{2 \times 2}$, and $\varphi(t) = -\frac{1}{l_g}\tilde v_g(t) \in \mathbb{R}^2$. The nonlinear function $f_k(x(t))$ is, 
$$f_k(x) = r_k(\tilde{i}) =\sum_{l=1}^{m}r_k^l(\tilde{i})= \begin{bmatrix}r_{k,1}(\tilde{i}_d) \\ r_{k,2}(\tilde{i}_q)\end{bmatrix} \in \mathbb{R}^2.$$ 

To ensure the existence of solutions to \eqref{eq:current_persidskii}, we make the following assumption on the nonlinear functions $r_k(\tilde{i})$.
\begin{assumption}\label{assum:nonlinear_functions}
For any $k \in \{1, \dots,M\}$, $j \in \{1,2\}$ and $\tilde{i}_j \in \{\tilde{i}_d, \tilde{i}_q\}$ the nonlinearities 
$r_k^j(\tilde{i}_j)$ satisfy
\[
\tilde{i}_j\, r_k^j(\tilde{i}_j) > 0, \qquad \forall\, \tilde{i}_j \in \mathbb{R}\setminus\{0\}.
\]
\end{assumption}

Under Assumption~\ref{assum:nonlinear_functions}, there exists an index $p \in \{0,\dots,M\}$ such that for all $j \in \{1,2\}$ and $q \in \{1,\dots,p\}$,
\[
\lim_{\tilde{i}_j \to \pm\infty} r_q^j(\tilde{i}_j) = \pm\infty.
\]
Moreover, there exists $\mu \in \{p,\dots,M\}$ such that, for all 
$j \in \{1,2\}$ and $l \in \{1,\dots,\mu\}$,
\[
\lim_{\tilde{i}_j \to \pm\infty} 
\int_{0}^{\tilde{i}_j} r_l^j(s)\, ds = +\infty.
\]
In particular, the case $p = 0$  corresponds to the situation where all nonlinearities  are bounded.% (at least in one direction) for each component of the state vector.

With these preliminaries, we now provide simplified sufficient conditions, expressed as LMIs involving constant matrices, for ensuring ISS of the error dynamics \eqref{eq:current_persidskii} using the Persidskii–system framework.
\begin{thm}\label{thm:thm2}
    Consider the error dynamics in \eqref{eq:error_dyn_generalized} satisfying Assumption \ref{assum:nonlinear_functions}. If there exist symmetric matrices $P \succeq 0 \in \mathbb{R}^{2 \times 2}$, $\Phi \succeq 0 \in \mathbb{R}^{2 \times 2}$, and diagonal matrices $\Lambda_k = \diag(\Lambda_{k,1},\Lambda_{k,2}) \in \mathbb{R}^{2 \times 2}_+$ for $k\in \{1, \dots,M\}$, $\Omega_s \in \mathbb{R}^{2 \times 2}_+$ for $s \in \{0, \dots,M\}$, $\Upsilon_{s,l} \in \mathbb{R}^{2 \times 2}_+$ for $0 \le s < l \le M$, such that, 
    \begin{gather}
        P+\sum_{k=1}^{M}\Lambda_k \succ 0\label{LMI-P}\\
        \Omega_0 + \sum_{k=1}^{M}\Upsilon_{0,k}+\sum_{k=1}^{M}\Omega_k+ \sum_{s=1}^{M}\sum_{l=s+1}^{M}\Upsilon_{s,l} \succ 0\label{LMI-Xi}
    \end{gather}
   and \begin{equation}\label{LMI-Q}
       \Psi \preceq 0
   \end{equation} 
   with the block entries of $\Psi$ given by, 
   \begin{equation*}
       \begin{aligned}
          \Psi_{1,1}& = A^\top P+PA+\Omega_0, ~\Psi_{M+2,M+2} = \Phi,\\
          \Psi_{k+1,k+1} &= -2\frac{1}{l_g}\Lambda_k +\Omega_k, ~k \in \{1,\dots,M\},\\ 
          \Psi_{1,k+1}& = -\frac{1}{l_g}P + A^\top \Lambda_k+ \Upsilon_{0,k}, ~k \in \{1,\dots,M\},\\
            \Psi_{1,M+2}& = P,~  \Psi_{k+1,M+2} = \Lambda_k, ~ k \in \{1,\dots,M\},~ \text{and}\\
             \Psi_{s+1,l+1} &= -\frac{1}{l_g}\Lambda_l -\frac{1}{l_g}  \Lambda_l + \Upsilon_{s,l},~  ~s \in \{1,\dots,M-1\}, \\
             &\hspace{3cm}l \in \{s+1,\dots,M\}\\
       \end{aligned}
   \end{equation*}
   are satisfied, then the closed-loop error dynamics \eqref{eq:error_dyn_generalized} is ISS with respect to grid voltage perturbations $\tilde{v}_g$.
\end{thm}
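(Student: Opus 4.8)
The plan is to run the standard composite-Lyapunov analysis for generalized Persidskii systems on the reformulation \eqref{eq:current_persidskii}, so that the ISS estimate can be read off the LMIs \eqref{LMI-P}--\eqref{LMI-Q}. The Lyapunov candidate I would use is
\[
V(\tilde i) \;=\; \tilde i^{\top} P\,\tilde i \;+\; 2\sum_{k=1}^{M}\sum_{j=1}^{2}\Lambda_{k,j}\int_{0}^{\tilde i_{j}} r_{k}^{j}(s)\,\mathrm{d}s ,
\]
which is continuously differentiable (each $r_k^j$ being continuous), with gradient $\nabla V(\tilde i)=2P\tilde i+2\sum_{k=1}^{M}\Lambda_{k} r_{k}(\tilde i)$. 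The first step is to establish the sandwich bounds $\gamma_1(\|\tilde i\|)\le V(\tilde i)\le\gamma_2(\|\tilde i\|)$ of \eqref{eq:condition1} with $\gamma_1,\gamma_2\in\mathcal{K}_\infty$: under Assumption~\ref{assum:nonlinear_functions} every integral $\int_{0}^{\tilde i_j}r_k^j(s)\,\mathrm{d}s$ is nonnegative and vanishes only when $\tilde i_j=0$, so \eqref{LMI-P} (positivity of $P+\sum_k\Lambda_k$) renders $V$ positive definite even when $P$ is only positive semidefinite, while the index conditions recorded after Assumption~\ref{assum:nonlinear_functions} (existence of $p$ and $\mu$) upgrade this to radial unboundedness; the upper bound is immediate from continuity together with $P\succeq 0$, $\Lambda_k\succeq 0$.

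Next I would differentiate $V$ along \eqref{eq:current_persidskii} and cast the result as a quadratic form in the augmented vector $\xi=\big[\tilde i^{\top},\,r_1(\tilde i)^{\top},\dots,r_M(\tilde i)^{\top},\,\varphi^{\top}\big]^{\top}$ with $\varphi=-\tfrac{1}{l_g}\tilde v_g$. Expanding $\nabla V(\tilde i)^{\top}\big(A\tilde i-\tfrac{1}{l_g}\sum_k r_k(\tilde i)+\varphi\big)$ reproduces precisely the non-multiplier blocks of $\Psi$, namely $A^{\top}P+PA$ in the $(1,1)$ block, $-\tfrac{2}{l_g}\Lambda_k$ in $(k{+}1,k{+}1)$, $-\tfrac{1}{l_g}P+A^{\top}\Lambda_k$ in $(1,k{+}1)$, $-\tfrac{1}{l_g}(\Lambda_s+\Lambda_l)$ in $(s{+}1,l{+}1)$, $P$ in $(1,M{+}2)$ and $\Lambda_k$ in $(k{+}1,M{+}2)$. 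I would then invoke the sector structure from Assumption~\ref{assum:nonlinear_functions}: since $\tilde i_j r_k^j(\tilde i_j)\ge 0$ and $r_k^j(\tilde i_j)r_l^j(\tilde i_j)\ge 0$ for all indices, for the nonnegative (diagonal) multipliers $\Omega_s$ and $\Upsilon_{s,l}$ every quantity $\tilde i^{\top}\Omega_0\tilde i$, $r_s(\tilde i)^{\top}\Omega_s r_s(\tilde i)$, $\tilde i^{\top}\Upsilon_{0,k}r_k(\tilde i)$ and $r_s(\tilde i)^{\top}\Upsilon_{s,l}r_l(\tilde i)$ is nonnegative. Adding these nonnegative terms to $\dot V$ (a lossless S-procedure step) turns the derivative into $\xi^{\top}\Psi\xi$ modulo the disturbance block, so that \eqref{LMI-Q} yields $\dot V(\tilde i)\le-\mathcal{S}(\tilde i)+\tfrac{1}{l_g^{2}}\tilde v_g^{\top}\Phi\,\tilde v_g$, where $\mathcal{S}(\tilde i)$ is the (nonnegative) sum of the orthant terms above.

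It remains to show that $\mathcal{S}$ is bounded below by a class-$\mathcal{K}_\infty$ function of $\|\tilde i\|$: this is the role of \eqref{LMI-Xi}, which forces, coordinatewise, at least one of the weights multiplying $\tilde i_j^{2}$, $\tilde i_j r_k^j(\tilde i_j)$, $r_k^j(\tilde i_j)^2$ or $r_s^j(\tilde i_j)r_l^j(\tilde i_j)$ to be strictly positive, and here the index conditions on the $r_k$ re-enter to guarantee coercivity of those terms. One then obtains the dissipation inequality $\dot V(\tilde i)\le-\alpha(\|\tilde i\|)+\sigma(\|\tilde v_g\|_\infty)$ with $\alpha\in\mathcal{K}_\infty$ and $\sigma(r)=\tfrac{\|\Phi\|}{l_g^{2}}r^{2}$, so that $V$ is an ISS-Lyapunov function in the sense of Definition~\ref{def:ISS_Lyap}; forward completeness follows because the bounds on $V$ and $\dot V$ preclude finite escape, and Theorem~\ref{thm:ISS_Lyap} then gives ISS of \eqref{eq:error_dyn_generalized} with respect to $\tilde v_g$.

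The step I expect to be the main obstacle is the coercivity accounting at the two ends — the $\mathcal{K}_\infty$ lower bound on $V$ in Step~1 and the extraction of $\alpha\in\mathcal{K}_\infty$ from $\mathcal{S}$ — because $P$ is allowed to be merely positive semidefinite, so in the regime where some (in the extreme case $p=0$, all) of the branch nonlinearities are bounded, the $\|\tilde i\|$-coercivity must be supplied entirely by the quadratic multiplier $\Omega_0$ and the definite part of $P$, and one has to check that \eqref{LMI-P}--\eqref{LMI-Xi} are consistent with this. The remaining quadratic-form bookkeeping in the derivative computation is routine but must be matched carefully against the block pattern of $\Psi$.
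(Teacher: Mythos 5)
Your proposal follows essentially the same route as the paper's proof: the same integral-plus-quadratic Lyapunov candidate, the same rewriting of $\dot V$ as $z^\top\Psi z$ minus the nonnegative multiplier terms plus the $\Phi$-weighted disturbance term, and the same use of \eqref{LMI-P}--\eqref{LMI-Q} to verify both conditions of Definition~\ref{def:ISS_Lyap} and conclude via Theorem~\ref{thm:ISS_Lyap}. The only divergence is in the final coercivity step, where you invoke the sign structure of Assumption~\ref{assum:nonlinear_functions} together with \eqref{LMI-Xi} to extract a class-$\mathcal{K}_\infty$ dissipation rate, whereas the paper applies Young's inequality to the cross terms so that the diagonal blocks $\Omega_0,\Omega_k$ dominate and yield a quadratic rate $-\varsigma\|\tilde i\|^2$ --- a difference of bookkeeping rather than of approach.
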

\begin{proof}
For the dynamics \eqref{eq:error_dyn_generalized}, consider a Lyapunov candidate \citep{efimov2019robust, mei2022convergence} as, 
\begin{equation}\label{eq:lyap_fn_persidskii}
    V(\tilde{i}) = \tilde{i}^\top P \tilde{i} + 2 \sum_{k=1}^{M} \sum_{j=1}^2 \Lambda_{k,j}\int_{0}^{\tilde{i}_j} r_k^j(\tau) ~d\tau.
\end{equation}

Whenever \eqref{LMI-P} and Assumption~\ref{assum:nonlinear_functions} hold, the Lyapunov function \(V(\tilde{i})\) is positive definite and radially unbounded. Therefore, there exist class \(\mathcal{K}_\infty\) functions \(\gamma_1,\gamma_2\) such that,
\[
\gamma_1(\|\tilde i\|)\le V(\tilde i)\le \gamma_2(\|\tilde i\|),
\]
which verifies the first ISS Lyapunov condition in Definition \ref{def:ISS_Lyap}.

Differentiating $ V(\tilde{i})$ along the trajectories of \eqref{eq:error_dyn_generalized} gives, 
\begin{equation*}
\begin{aligned}
\dot{V}(\tilde{i}) &= \dot{\tilde{i}}^\top P \tilde{i} + \tilde{i}^\top P \dot{\tilde{i}} + 2 \sum_{k=1}^{M} \dot{\tilde{i}}^\top \Lambda_k r_k(\tilde{i})\\
&= \tilde{i}^\top\big(A^\top P + P A\big)\tilde{i}
- \tilde{i}^\top P\left(\frac{1}{l_g}\sum_{k=1}^{M} r_k(\tilde{i})\right)\\
&\quad
- \left(\frac{1}{l_g}\sum_{k=1}^{M} r_k(\tilde{i})\right)^\top P \tilde{i} - 2\,\tilde{i}^\top P\frac{1}{l_g}\tilde{v}_g \\
&\quad + 2\sum_{k=1}^M\Bigg(\tilde{i}^\top A^\top \Lambda_k r_k(\tilde{i})
- \frac{1}{l_g}\sum_{s=1}^{M} r_s(\tilde{i})^\top \Lambda_k r_k(\tilde{i})\\
&\quad
- \frac{1}{l_g}\tilde{v}_g^\top \Lambda_k r_k(\tilde{i})\Bigg).
\end{aligned}
\end{equation*}

Introducing non-negative matrices, $\Omega_0, \Omega_k, \Upsilon_{0,j}, \Upsilon_{s,r}, \Phi$, the expression can be equivalently written as, 
\begin{equation*}
    \begin{aligned}
    \dot{V}(\tilde{i}) & = z^\top \Psi z - \tilde{i}^\top \Omega_0 \tilde{i}  - \sum_{k=1}^{M}r_k(\tilde{i})^\top \Omega_k r_k(\tilde{i}) \\
    &\quad- 2 \sum_{k=1}^{M}  \tilde{i}^\top  \Upsilon_{0,k} r_k(\tilde{i}) -2 \sum_{s=1}^{M-1} \sum_{l=s+1}^{M} r_s(\tilde{i})^\top \Upsilon_{s,l} r_l(\tilde{i}) \\
    &\quad + \frac{1}{l_g^2}\tilde{v}_g^\top \Phi \tilde{v}_g
    \end{aligned}
\end{equation*}
where,  $z = \begin{bmatrix} \tilde{i} & r_1(\tilde{i}) & \dots &  r_M(\tilde{i})&\tilde{v}_g\end{bmatrix}^\top$.

Under \eqref{LMI-Q}, i.e., if $\Psi \preceq 0$, this yields the inequality, 
\begin{multline}\label{eq:v_bound_persidskii}
    \dot{V}(\tilde{i}) \le  - \tilde{i}^\top \Omega_0 \tilde{i}  - \sum_{k=1}^{M}r_k(\tilde{i})^\top \Omega_k r_k(\tilde{i})  - 2 \sum_{k=1}^{M}  \tilde{i}^\top  \Upsilon_{0,k} r_k(\tilde{i}) \\ -2 \sum_{s=1}^{M-1} \sum_{l=s+1}^{M} r_s(\tilde{i})^\top \Upsilon_{s,l} r_l(\tilde{i}) + \frac{1}{l_g^2}\tilde{v}_g^\top \Phi \tilde{v}_g.
\end{multline}

We further show that \(\dot V(\tilde{i})\) is upper bounded by a negative definite quadratic term with \(\tilde{i}\) and a term depending on the input \(\tilde v_g\), such that the above inequality satisfies the second condition for ISS in Definition \ref{def:ISS_Lyap}. 

% \textcolor{blue}{Please verify this sentence as well} To this end, consider inequality \eqref{LMI-Xi} that ensures the diagonal matrices $\Omega_0 \succ 0$ $\Omega_k \succ 0,~ k=1,\dots,M$,
% and the cross blocks \(\Upsilon_{0,k},\Upsilon_{s,l}\) are finite matrices.  Further, bound each bilinear (cross) term by Young's inequality. Accordingly, for any scalar \(\varepsilon_k>0\), $k\in \{1,\dots,M\}$, we have the  bound on the term $2 \tilde{i}^\top \Upsilon_{0,k} r_k(\tilde{i})$ as, 
To this end, consider inequality \eqref{LMI-Xi}, which guarantees that the diagonal matrices 
$\Omega_0 \succ 0$ and $\Omega_k \succ 0$ for $k=1,\dots,M$, and that the cross blocks 
$\Upsilon_{0,k}$ and $\Upsilon_{s,l}$ are finite. Next, each bilinear (cross) term can be 
bounded using Young's inequality. In particular, for any scalar $\varepsilon_k > 0$, $k \in 
\{1,\dots,M\}$, the term $2 \tilde{i}^\top \Upsilon_{0,k} r_k(\tilde{i})$ can be bounded as
\[
2 \tilde{i}^\top \Upsilon_{0,k} r_k(\tilde{i})
\le \varepsilon_k \tilde{i}^\top \tilde{i} + \frac{1}{\varepsilon_k}\, r_k(\tilde{i})^\top \Upsilon_{0,k}^\top \Upsilon_{0,k} \, r_k(\tilde{i}).
\]
Similarly, for \(s<l\) and any \(\delta_{s,l}>0\), we have,
\begin{equation*}
2 r_s(\tilde{i})^\top \Upsilon_{s,l} r_l(\tilde{i})
\le \delta_{s,l}\, r_s(\tilde{i})^\top r_s(\tilde{i}) + \frac{1}{\delta_{s,l}}\, r_l(\tilde{i})^\top \Upsilon_{s,l}^\top \Upsilon_{s,l}\, r_l(\tilde{i}).    
\end{equation*}

Now, upon choosing the scalar parameters \(\varepsilon_k\), $k\in \{1,\dots,M\}$ and \(\delta_{s,l}\), $s\in \{1,\dots,M-1\}$, $l\in \{s+1,\dots,M\}$ small enough,  the diagonal negative-definite terms \(-\tilde{i}^\top\Omega_0\tilde{i}\) and \(-r_k(\tilde{i})^\top\Omega_k r_k(\tilde{i})\) dominate the corresponding positive contributions from the right-hand side of above inequality bounds. Further, denote:
\[
\sigma_0 := \lambda_{\min}(\Omega_0)>0,\quad \sigma_k := \lambda_{\min}(\Omega_k)>0,\; k=1,\dots,M.
\]
Because \(\Upsilon_{0,k}^\top\Upsilon_{0,k}\) and \(\Upsilon_{s,l}^\top\Upsilon_{s,l}\) are finite, there exist choices of \(\varepsilon_k,\delta_{s,l}>0\) such that for each \(k\), we have,
\begin{gather*}
\sigma_0 - \varepsilon_k > 0,\qquad \sigma_k - \frac{1}{\varepsilon_k}\lambda_{\max}(\Upsilon_{0,k}^\top\Upsilon_{0,k}) \\ - \sum_{s\ne l}\frac{1}{\delta_{\min(s,l)}}\lambda_{\max}(\Upsilon_{\min(s,l),\max(s,l)}^\top\Upsilon_{\min(s,l),\max(s,l)})>0,
\end{gather*}
where the latter removes the contributions of cross-terms involving \(r_k(\tilde{i})\). (Such a choice is possible because each \(\sigma_k>0\).) With these choices, we obtain a constant \(\varsigma_1>0\) such that the quadratic terms in \eqref{eq:v_bound_persidskii} satisfy the lower bound
\begin{multline}\label{eq:bound_quadratic}
- \tilde{i}^\top \Omega_0 \tilde{i}  - \sum_{k=1}^{M} r_k(\tilde{i})^\top \Omega_k r_k(\tilde{i})  - 2 \sum_{k=1}^{M}  \tilde{i}^\top  \Upsilon_{0,k} r_k(\tilde{i}) 
\\ -2 \sum_{s=1}^{M-1} \sum_{l=s+1}^{M} r_s(\tilde{i})^\top \Upsilon_{s,l} r_l(\tilde{i}) 
\le - \varsigma_1 \|\zeta(\tilde{i})\|^2,
\end{multline}
where \(\zeta(\tilde{i})\) is a stacked vector collecting \(\tilde{i}\) and the \(r_k(\tilde{i})\)'s 
%(or, equivalently, \(\|\zeta\|^2\) is a positive definite quadratic form in \(\tilde i\)) 
and \(\varsigma_1>0\) depends on the chosen \(\varepsilon_k,\delta_{s,l}\) and the minimal eigenvalues \(\sigma_0,\sigma_k\). Further, because \(r_k(\cdot)\) are functions of \(\tilde i\), there exists \(\varsigma_2>0\) (using norm equivalence in finite dimensions and the structure of the \(r_k\)) such that,
\[
\|\zeta(\tilde{i})\|^2 \ge \varsigma_2 \|\tilde{i}\|^2.
\]
Hence, the quadratic term \eqref{eq:bound_quadratic} in \eqref{eq:v_bound_persidskii} can be further bounded as,
\begin{multline*}
- \tilde{i}^\top \Omega_0 \tilde{i}  - \sum_{k=1}^{M} r_k(\tilde{i})^\top \Omega_k r_k(\tilde{i})  - 2 \sum_{k=1}^{M}  \tilde{i}^\top  \Upsilon_{0,k} r_k(\tilde{i}) 
\\ -2 \sum_{s=1}^{M-1} \sum_{l=s+1}^{M} r_s(\tilde{i})^\top \Upsilon_{s,l} r_l(\tilde{i})
\le - \varsigma\, \|\tilde{i}\|^2
\end{multline*}
for some constant \(\varsigma>0\).

The remaining term in \eqref{eq:v_bound_persidskii} depends only on the input, which can be bounded as, 
\begin{equation*}
   \begin{aligned}
   \frac{1}{l_g^2}\tilde{v}_g^\top \Phi \tilde{v}_g
&\le \frac{1}{l_g^2}\lambda_{\max}(\Phi)\,\|\tilde v_g\|^2 \\
&= \alpha\,\|\tilde v_g\|^2
   \end{aligned} 
\end{equation*}
where \(\alpha:=\tfrac{1}{l_g^2}\lambda_{\max}(\Phi)\ge 0\). Combining the two bounds yields
\[
\dot V(\tilde{i}) \le - \varsigma\,\|\tilde i\|^2 + \alpha\,\|\tilde v_g\|^2
\]
which corresponds to the second condition in Definition \ref{def:ISS_Lyap}.

Thus, the Lyapunov function in \eqref{eq:lyap_fn_persidskii} satisfies both the conditions in Definition \ref{def:ISS_Lyap}, whenever LMIs \eqref{LMI-P}-\eqref{LMI-Q} are satisfied, which implies the ISS of the closed loop error dynamics \eqref{eq:error_dyn_generalized} according to Theorem \ref{thm:ISS_Lyap}. 

Furthermore, using the above differential inequality and standard ISS Lyapunov arguments \citep{Son89}, we obtain the ISS estimate of the current error as, 
\[
\|\tilde i(t)\| \le \beta(\|\tilde i_0\|,t) + \gamma\big(\|\tilde v_g\|_{\infty}\big),
\]
for some \(\beta\in\mathcal{KL}\) and \(\gamma\in\mathcal{K}\), with \(\gamma(s)=\sqrt{\frac{\alpha}{\varsigma}}\;s\) (or any admissible \(\mathcal K\)-function upper bounding the steady-state gain). 
%Thus the system is input-to-state stable with respect to the input \(\tilde v_g\), according to Definition \ref{Def:ISS}.
\end{proof}

\begin{rem}
The conditions in Theorem~\ref{thm:thm2} are obtained as LMIs involving only the system matrix $A$, which depends on the grid parameters $(r_g,l_g)$. Therefore, these conditions can be easily verified to analyze the ISS of the error dynamics \eqref{eq:error_dyn_generalized} under the proposed multi-branch VR control structure.  
\remarkend
\end{rem}
\begin{rem}
The obtained ISS conditions in Theorem~\ref{thm:thm2} may be further relaxed upon establishing the global stability of error system~(\ref{eq:error_dyn_generalized}). In particular, once global stability is ensured, all bounded nonlinearities $r_{k}(\tilde{i})$ can be incorporated into the input term $\tilde v_g$. Moreover, certain cross-terms involving bounded nonlinearities, such as
$\tilde v_g^\top \Lambda_j r_k(\tilde{i})$, can also be absorbed into $\tilde v_g$. This
reformulation can lead to a sparser structure for the matrix $\Psi$.
\remarkend
\end{rem}

\section{Simulation Results}\label{sec:simulation}
 In this section, we present simulation results for the proposed multi-branch virtual resistance–based control of the grid-connected inverter. The system parameters used in the simulations are summarized in
Table~\ref{tab:inverter-connected}.
\begin{table}[!h]
    \centering
    \caption{Grid and inverter parameters}
    \label{tab:inverter-connected}
    \begin{tabular}{lcc}
        \hline
        \textbf{Parameter} & \textbf{Symbol} & \textbf{Value} \\
        \hline
        Grid inductance & $l_g$ & $ 0.367\,\text{mH}$ \\
        Nominal grid resistance & $r_{gn}$ & $27.6\,\text{m}\Omega$ \\
        Nominal grid voltage (L--N, rms) & $ v_{g_{ref}}$ & $392\,\text{V}$ \\
        Grid frequency & $\omega_g$ & $60\,\text{Hz}$ \\
         %Reference current amplitude & $I_0$ & $100\,\text{A}$ \\ % uncomment if needed
        \hline
    \end{tabular}
\end{table}

To evaluate the performance of the proposed controller, several test cases are considered, including grid-voltage variations and grid-parameter uncertainties. 
%In all the cases studied, the controller parameters are kept constant and are listed in Table~\ref{tab:controller-parameters}.
\subsection{Scenario 1: Performance under grid-voltage variations}
The first scenario is used to analyze the dynamic response and transient behavior of the system under grid-voltage fluctuations, by comparing the proposed method with the results presented in \cite{anubi2022robust}. 
A grid-voltage pulse of amplitude $+40\%\,v_{g_{ref}}$ is applied on the $d$-axis between $t=0.1~\text{s}$ and $t=0.101~\text{s}$, while the $q$-axis voltage remains equal to zero. The time responses of the currents $i_d$ and $i_q$ and their tracking errors $\tilde{i}_d$ and $\tilde{i}_q$ are shown in Fig.~\ref{fig:1} and Fig.~\ref{fig:2}, respectively, with the different VR laws (linear, cubic, hybrid, and $\sinh$) proposed in \cite{anubi2022robust} and with the proposed virtual resistance structure. 

\begin{figure}[!ht]
  \centering
\includegraphics[width=0.48\textwidth]{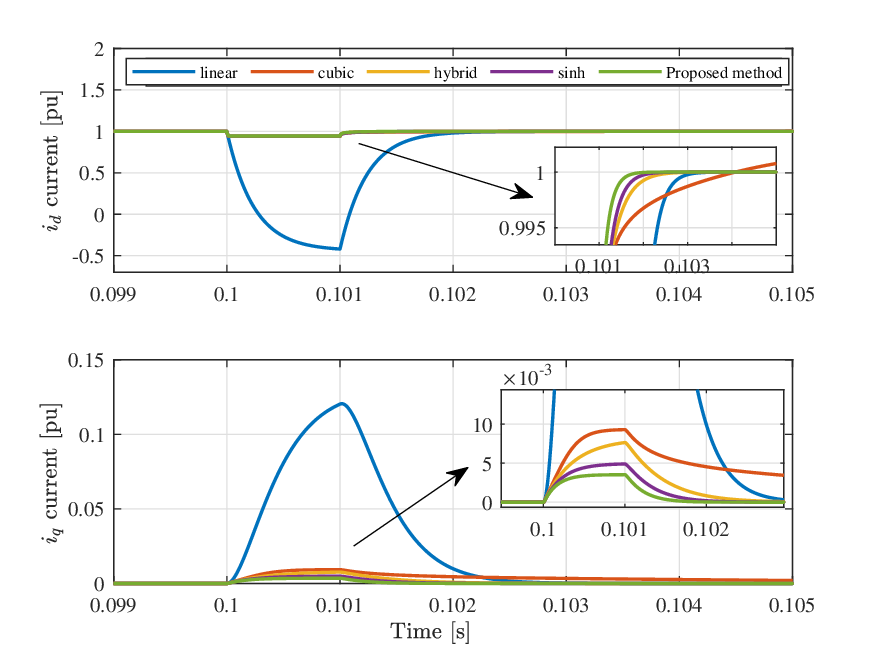}
  \caption{Response of current $i$ under grid-voltage variations}
  \label{fig:1}
\end{figure}
\begin{figure}[!ht]
  \centering
\includegraphics[width=0.48\textwidth]{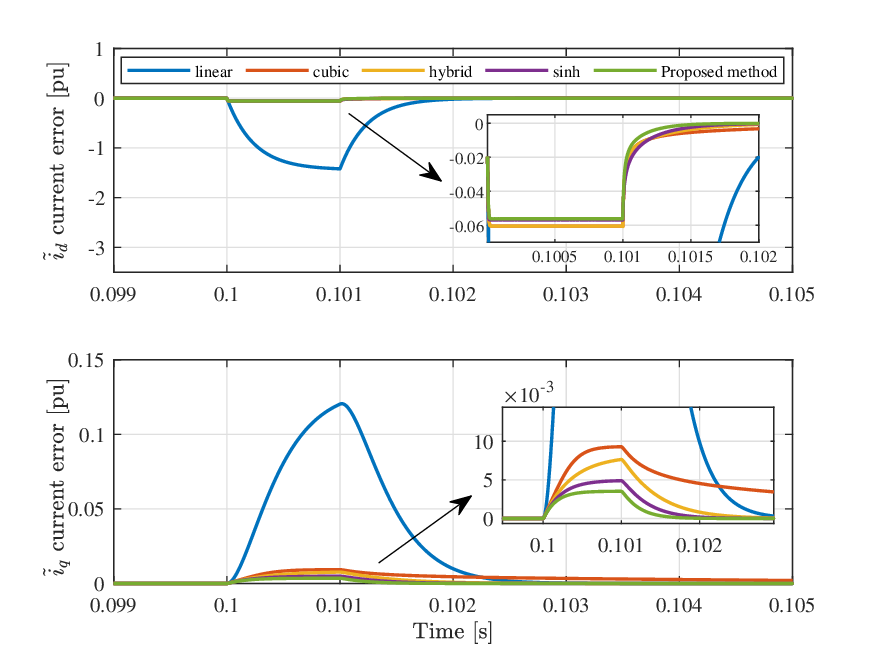}
  \caption{Response of current error $\tilde{i}$ under grid-voltage variations}
  \label{fig:2}
\end{figure}
It can be observed that the proposed method achieves the best current-tracking performance in both $i_d$ and $i_q$ among all single-branch VR control laws. In particular, the proposed multi-branch method significantly reduces the overshoot in $i_d$ after the voltage pulse and shortens the settling time compared with the single-branch resistances, while preserving a negligible steady-state error. Moreover, the proposed multi-branch structure reduces the overshoot in $i_q$ and provides a faster recovery after the voltage pulse. Because of the coupling with the $d$-axis voltage, $i_q$ shows a transient deviation, which is smaller and disappears faster with the proposed controller. Furthermore, the steady-state errors in both $i_d$ and $i_q$ remain very close to zero after the disturbance.
%The coupling effect on $i_q$ induced by the sudden change in $v_{gd}$ is also more effectively damped by the proposed structure, leading to a smaller $i_q$ excursion and a faster return to zero.
%\textcolor{blue}{Do the previous paragraph and next paragraph describe the same thing?}

These observations are confirmed by the performance indices summarized in Table~\ref{tab:Performance1}, where the proposed method achieves the smallest settling times at $2\%$ and the lowest RMS values of the $d$-axis tracking error and the $q$-axis current error, when compared to the linear, cubic, hybrid, and $\sinh$ VR laws.
\begin{table}[!ht]
\centering
\caption{Performance indices for the different VR-control laws in Scenario 1.}
\label{tab:Performance1}
\footnotesize
\setlength{\tabcolsep}{5pt}
\begin{tabular}{lcccc}
\hline
VR control law &
\makecell[c]{Settling time\\$T_{s,d}^{2\%}$ [ms]} &
\makecell[c]{RMS ($\tilde{i}_d$)\\{}[A]} &
\makecell[c]{RMS ($\tilde{i}_q$)\\{}[A]} \\
\hline
Linear          & 2.000   &   12.6141 	& 1.0561\\
Cubic           & 1.040  &  0.6283	 & 0.1170\\
Hybrid          & 1.041  &  0.6243	 & 0.0706\\
$\sinh$         & 1.066   &   0.7808 & 0.0592\\
Proposed method & \textbf{1.03} & \textbf{0.5834} &  \textbf{0.0339}  \\
\hline
\end{tabular}
\end{table}

\subsection{Scenario 2: Performance under random grid-resistance variations}

In this scenario, we analyze the robustness of the proposed controller with respect to grid-parameter uncertainties. 
The grid resistance is modeled as a time-varying parameter $r_g(t)$ that randomly changes around its nominal value $r_{gn}$ as
$r_g(t) \in [0.1\,r_{gn},\, 1.9\,r_{gn}]$ for $t \in [0.2,\,0.8]~\text{s}.$
This case is used to compare the proposed multi-branch VR control with the single-branch linear, cubic, hybrid, and $\sinh$ virtual resistance laws under severe random variations of the grid resistance. 

The current responses $i_d$ and $i_q$ in Fig. \ref{fig:s2_1} and the current tracking errors $\tilde{i}_d$ and $\tilde{i}_q$ in Fig. \ref{fig:s2_2} show the effectiveness of the proposed method against grid-resistance variations, compared with the other single-branch virtual resistance laws. 
\begin{figure}[!ht]
  \centering
\includegraphics[width=0.48\textwidth]{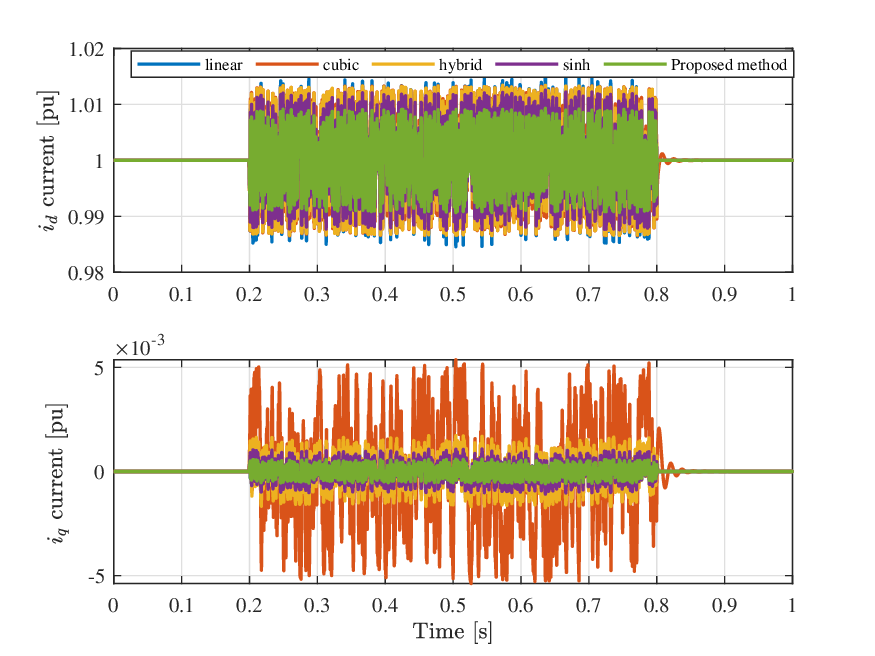}
  \caption{Response of currents under random grid resistance variations.}
  \label{fig:s2_1}
\end{figure}

\begin{figure}[!ht]
  \centering
\includegraphics[width=0.48\textwidth]{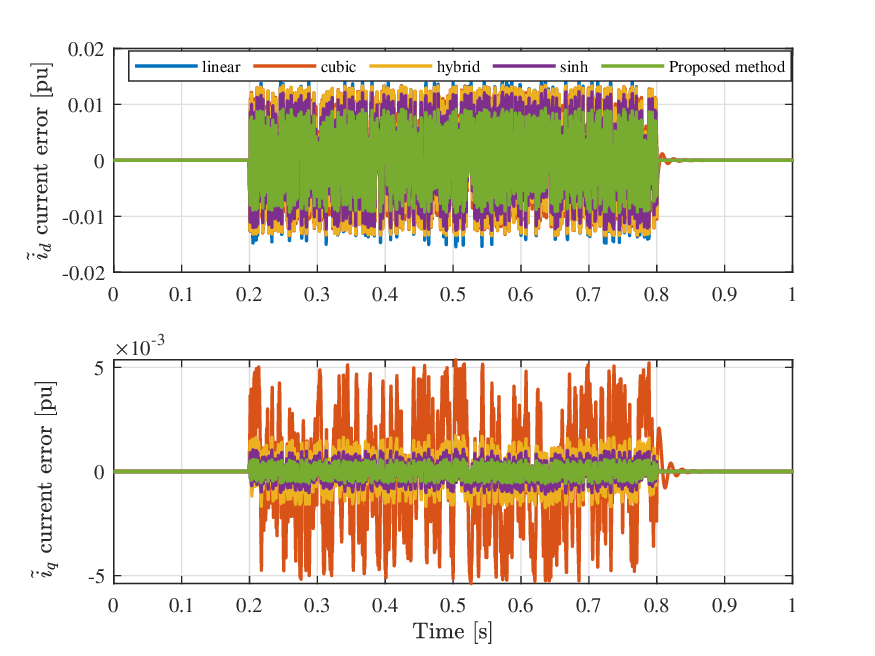}
  \caption{Response of currents errors under random grid resistance variations.}
  \label{fig:s2_2}
\end{figure}

Moreover, the performance indices in Table~\ref{tab:Performance2} show that the proposed controller offers the best compromise between tracking accuracy and robustness, as it consistently attains the smallest RMS values of the tracking errors.

\begin{table}[!ht]
\centering
\caption{Performance indices for the different VR-control laws in Scenario 2.}
\label{tab:Performance2}
\footnotesize
\setlength{\tabcolsep}{5pt}
\begin{tabular}{lcc}
\hline
VR control law &
\makecell[c]{RMS ($\tilde{i}_d$)\\ {}[A]} &
\makecell[c]{RMS ($\tilde{i}_q$)\\{}[A]} \\
\hline
Linear          & 0.6092  & 0.0477\\
Cubic           &0.7520	  & 0.2499 \\
Hybrid          & 0.7084  & 0.0762 \\
$\sinh$         & 0.5503 & 0.0425  \\
Proposed method & \textbf{0.4326} & \textbf{0.0252} \\
\hline
\end{tabular}
\end{table}
%\textcolor{blue}{I think it is better to add RMS $(\tilde{i}_q)$ of Scenario 1 in the third column and keep this table only for Scenario 1}
%\textcolor{blue}{If needed, a separate table can be added for Scenario 2.}
\section{Conclusion}\label{sec:conclusion}
In this paper, we propose a robust and flexible virtual resistance–based control strategy for grid-connected inverters. The resulting closed-loop current–error dynamics is modeled as a nonlinear dynamical system. We establish two main theoretical results. The first theorem proves ISS of the error dynamics under strong structural conditions on the system. The second theorem leverages the Persidskii-systems framework to relax and simplify these conditions by introducing a Lyapunov function that explicitly incorporates the system nonlinearities. Several illustrative examples and simulations demonstrate the effectiveness and robustness of the proposed control methodology.

\bibliography{bib}

\end{document}